\newcommand{\Prefix}{\mathit{prefix}}
\newcommand{\lcp}{\mathit{lcp}}
\newcommand{\LPF}{\mathit{LPF}}
\newcommand{\PrevOcc}{\mathit{PrevOcc}}
\newcommand{\PSV}{\mathit{PSV}}
\newcommand{\NSV}{\mathit{NSV}}
\newcommand{\suf}[1]{\mathit{suf}({#1})}
\newcommand{\EMPTY}{\mathit{EMPTY}}
\newcommand{\SA}{\mathit{SA}}
\newcommand{\RANK}{\mathit{SA}^{-1}}
\newcommand{\ignore}[1]{}
\newtheorem{definition}{Definition}
\newcommand{\Lbkts}{\mathit{Lbkts}}
\newcommand{\Lbkte}{\mathit{Lbkte}}
\newcommand{\Sbkts}{\mathit{Sbkts}}
\newcommand{\Sbkte}{\mathit{Sbkte}}
\newcommand{\Array}{\mathit{A}}
\newcommand{\LMSSA}{\mathit{LMS\_SA}}
\newcommand{\Phia}{\mathit{\Phi}}
\newcommand{\prestep}{{\rm preliminary step}}
\newcommand{\parstep}{{\rm parsing step}}
\title{
  Space Efficient Linear Time Lempel-Ziv Factorization on Constant~Size~Alphabets
}
\author[1, 2]{Keisuke Goto}
\author[1]{Hideo Bannai}
\affil[1]{Department of Informatics, Kyushu University, Japan}
\affil[2]{Japan Society for the Promotion of Science (JSPS)}
\affil[ ]{\textit {\{keisuke.gotou, bannai\}@inf.kyushu-u.ac.jp}}
\date{}
\begin{document}
\maketitle
\begin{abstract}
  We present a new algorithm for computing the Lempel-Ziv Factorization (LZ77)
  of a given string of length $N$ in linear time,
  that utilizes only $N\log N + O(1)$ bits of working space, i.e., 
  a single integer array, for constant size integer alphabets.
  This greatly improves the previous best space requirement for 
  linear time LZ77 factorization (K\"arkk\"ainen et al. CPM 2013),
  which requires two integer arrays of length $N$.
  Computational experiments show that despite the added complexity of the algorithm,
  the speed of the algorithm is only around twice as slow as previous fastest 
  linear time algorithms.
\end{abstract}

\section{Introduction}
Lempel-Ziv (LZ77) factorization~\cite{LZ77} is one of the most important concepts in string
processing with countless applications in compression~\cite{LZ77,rytter03:_applic_lempel_ziv},
as well as efficient string processing~\cite{kolpakov99:_findin_maxim_repet_in_word,duval04:_linear}.
More recently, its importance has been reasserted due to the highly repetitive 
characteristics of modern datasets, such as collections of genome sequences,
for which compression schemes based on LZ77 have been shown to be particularly 
effective~\cite{kreft11:_self_based_lz77}.
Thus, time and space efficient computation of LZ77 factorization is a very important and 
heavily studied topic~(See~\cite{a.ss:_lempel_ziv_lz77} for a survey). 

In this paper, we focus on worst case linear time algorithms for computing the LZ77
factorization of a given text.
All existing linear time algorithms are based on the suffix array,
which can be constructed in linear time independent of alphabet size, when assuming
an integer alphabet.
The earlier algorithms further compute and utilize several other auxiliary integer arrays 
of length $N$, such as the inverse suffix array,
the longest common prefix (LCP) array~\cite{Kasai01},
and the Longest Previous Factor (LPF) array~\cite{crochemore08:_comput_longes_previous_factor},
and thus until recently, required at least 3 auxiliary integer arrays of length $N$ in
addition to the text.
Since all values in the LCP and LPF arrays are not required for computing the LZ factorization,
the most efficient recent linear time algorithms~\cite{goto13_lz,karkkainen_lz}
avoid constructing these arrays altogether.

The currently fastest linear time LZ-factorization algorithm,
as well as the currently most space economical linear time LZ-factorization algorithm,
have been proposed by K\"{a}rkk\"{a}inen et al.~\cite{karkkainen_lz}
They proposed 3 algorithms KKP3, KKP2, and KKP1,
which respectively store and utilize 3, 2, and 1 auxiliary integer arrays of 
length $N$ kept in main memory.
All three algorithms compute the LZ-factorization of the input text
given the text and its suffix array.
KKP3 is very similar to LZ\_BG~\cite{goto13_lz}, 
but is modified so that array accesss are more cache friendly,
thus making the algorithm run faster.
KKP2 is based on KKP3, but further reduces one integer array 
by an elegant technique that rewrites values on the integer array.
KKP1 is the same as KKP2, except that it assumes that the suffix array
is stored on disk, but since the values of the suffix array are only accessed sequentially,
the suffix array is streamed from the disk.
Thus, KKP1 can be regarded as requiring only a single integer array to be held in memory.
In this sense, KKP1 is the most space economical linear time algorithm, 
and has been shown to be faster than KKP2, if we assume that the suffix array is
already computed and exists on disk~\cite{karkkainen_lz}.
However, note that the {\em total} space requirement of KKP1 is 
still two integer arrays, one existing in memory and the other existing on disk.

%% However, this would not be a fair comparison if we consider computing the LZ factorization
%% from scratch, i.e. starting from the input text, since the suffix array would first need
%% to be computed and written to disk.

% \begin{table}[t]
%   \begin{center}
%     \caption{Fast Linear time LZ-Factorization Algorithms by lazy manner}
%     \label{table:lineartimealgorithms}
%     \begin{tabular}{|l|c|c|c|c|}\hline
%       algorithm & \# of integer arrays & \# of inter arrays
%       stored in RAM \\ \hline
%       KKP3 & 3 & 3 \\
%       KKP2 & 2 & 2 \\
%       KKP1 & 2 & 1 \\
%       this work & 1 & 1 \\ \hline
%     \end{tabular}
%   \end{center}
% \end{table}

In this paper, we propose new algorithms for computing the LZ77 factorization
that uses only a single auxiliary integer array of length $N$.
We achieve this by introducing a series of techniques for
rewriting the various auxiliary integer arrays from one to another,
in linear time and in-place, i.e., using only constant extra space.
Computational experiments show that our algorithm is at most 
around twice as slow as previous algorithms, but in turn, uses only half the total space,
and may be a viable alternative when the total space (including disk)
is a limiting factor due to the enormous size of data.

%-*-mode: latex; coding:utf-8-*-

\section{Preliminaries}
Let $\Sigma$ be a finite {\em alphabet}.
In this paper, we assume that $\Sigma$ is an integer alphabet of constant size.
An element of $\Sigma^*$ is called a {\em string}.
The length of a string $T$ is denoted by $|T|$.
The empty string $\varepsilon$ is the string of length 0,
namely, $|\varepsilon| = 0$.
Let $\Sigma^+ = \Sigma^* - \{\varepsilon\}$.
For a string $S = XYZ$, $X$, $Y$ and $Z$ are called
a \emph{prefix}, \emph{substring}, and \emph{suffix} of $T$, respectively.
The set of prefixes of $T$ is denoted by $\Prefix(T)$.
The \emph{longest common prefix} of strings $X,Y$, denoted $\lcp(X, Y)$, 
is the longest string in $\Prefix(X) \cap \Prefix(Y)$.

The $i$-th character of a string $T$ is denoted by 
$T[i]$ for $1 \leq i \leq |T|$,
and the substring of a string $T$ that begins at position $i$ and
ends at position $j$ is denoted by $T[i..j]$ for $1 \leq i \leq j \leq |T|$.
For convenience, let $T[i..j] = \varepsilon$ if $j < i$,
$\suf{i}$ indicates $T[i..|T+1|]$,
and $T[|T|+1] = \$$ where $\$$ is a special delimiter character that
does not occur elsewhere in the string.

\subsection{Suffix Arrays}
The suffix array~\cite{manber93:_suffix} $\SA$ of any string $T$
is an array of length $|T|$ such that
for any $1 \leq i \leq |T|$,
$\SA[i] = j$ indicates that $\suf{j}$ is the $i$-th lexicographically smallest suffix of $T$.
For convenience, we assume that $\SA[0]=\SA[N+1]=0$.
The inverse suffix array $\RANK$ of $\SA$ is an array of length $|T|$ such that
$\RANK[\SA[i]] = i$.
As in~\cite{karkkainen09_plcp}, let $\Phia$ be an array of length $|T|$ such that
$\Phia[\SA[1]] = |T|$ and
$\Phia[\SA[i]] = \SA[i-1]$ for $2 \leq i \leq |T|$,
i.e., for any suffix $j = \SA[i]$,
$\Phia[j] = \SA[i-1]$ is the immediately preceding suffix in
the suffix array.
The suffix array $\SA$ for any string of length $|T|$
can be constructed in $O(|T|)$ 
time regardless of the alphabet size, assuming an integer
alphabet~(e.g.~\cite{Karkkainen_Sanders_icalp03, nong11_sais}).
Furthermore, there exists a linear time suffix array construction algorithm
for a constant alphabet using $O(1)$ working space~\cite{nong13_sacak}.

\subsection{LZ Encodings}
LZ encodings are dynamic dictionary based encodings with many variants.
The variant we consider is also known as the s-factorization~\cite{crochemore84:_linear}.

\begin{definition}[LZ77-factorization]
  \label{def:s_factorization}
  The s-factorization of a string $T$ is
  the factorization $T = f_1 \cdots f_n$ where each
  s-factor $f_k\in\Sigma^+~(k=1,\ldots,n)$ starting at
  position $i=|f_1\cdots f_{k-1}|+1$ in $T$
  is defined as follows:
  If $T[i] = c \in \Sigma$ does not occur before $i$
  then $f_k = c$. Otherwise, $f_k$ is the longest prefix of $\suf{i}$
  that occurs at least once before $i$.
\end{definition}
Note that each LZ factor can be represented in constant space,
i.e., 
a pair of integers where the first and second elements
respectively represent the length and position of 
a previous occurrence of the factor.
If the factor is a new character and the length of its previous
occurrence is $0$, the second element will encode the new character
instead of the position.
For example the s-factorization of the string
  $T = \mathtt{abaabababaaaaabbabab}$ is
  $\mathtt{a}$,
  $\mathtt{b}$,
  $\mathtt{a}$,
  $\mathtt{aba}$,
  $\mathtt{baba}$,
  $\mathtt{aaaa}$,
  $\mathtt{b}$,
  $\mathtt{babab}$. This can be represented as
  $(0, \mathtt{a})$, $(0, \mathtt{b})$, $(1,1)$, $(3,1)$,
  $(4,5)$, $(4,10)$, $(1,2)$, $(5,5)$.

We define two functions $\LPF$ and $\PrevOcc$ below.
For any $1\leq i \leq N$,
$\LPF(i)$ is the longest length of longest common prefix
between $\suf{i}$ and $\suf{j}$ for any $1 \leq j < i$, and
$\PrevOcc(i)$ is a position $j$ which gives
$\LPF(i)$\footnote{There can be multiple choices of $j$, but
  here, it suffices to fix one.}.
More precisely,
\begin{eqnarray*}
\LPF(i) &=& \max(\{0\}\cup\{\lcp(\suf{i},\suf{j}) \mid 1 \leq j < i \})\\
\mbox{and}\\
\PrevOcc(i) &=& 
\begin{cases}
  -1      & \mbox{if } \LPF(i) = 0\\
  j       & \mbox{otherwise}
\end{cases}
\end{eqnarray*}
where $j$ satisfies $1 \leq j < i$, and $T[i:i+\LPF(i)-1] = T[j:j+\LPF(i)-1]$.
Let $p_k = |f_1 \cdots f_{k-1}|+1$.
Then, $f_k$ can be represented as a pair
$(\LPF(p_k), \PrevOcc(p_k))$ if $\LPF(p_k)>0$,
and $(0, T[p_k])$ otherwise.

Crochemore and Ilie~\cite{crochemore08:_comput_longes_previous_factor}
showed that candidates values for $PrevOcc(i)$
can be reduced to only 2 position, namely,
the previous smaller value (PSV) and the next smaller value (NSV)~\cite{crochemore08:_comput_longes_previous_factor},
which are defined as follows:
\begin{eqnarray*}
  \PSV[i] &=& \SA[j_1] \\
  \NSV[i] &=& \SA[j_2]
\end{eqnarray*}
where 
$j_1 = \max(\{0\}\cup \{ 1 \leq j < \RANK[i] \mid \SA[j] < \SA[i]\})$
and 
$j_2 = \min(\{N+1\}\cup \{ N \geq j > \RANK[i] \mid \SA[j] < \SA[i] \})$.

In what follows, 
we assume that the algorithms output each LZ factor sequentially,
and will not include the total size of the LZ factorization
in the working space.

%-*-mode: latex; coding:utf-8-*-

\section{Previous Algorithm}
We first describe the 3 variants (KKP3, KKP2, and KKP1) of the LZ factorization algorithm
proposed by K\"{a}rkk\"{a}inen et.al~\cite{karkkainen_lz}.
KKP3 consists of two steps, which we shall call the {\prestep} and the \parstep.
In the {\prestep}, KKP3 computes $\PSV$ and $\NSV$ for all positions and stores them in integer arrays.
Although we defer the details,
the $\PSV$ and $\NSV$ arrays can be computed in linear time by sequentially scanning $\SA$ of $T$,
and is based on the peak elimination by Crochemore and Ilie~\cite{crochemore08:_comput_longes_previous_factor}.
Then, in the \parstep,
KKP3 computes the LZ-factorization by a naive comparison between
$\suf{i}$ and $\suf{\PSV[i]}$, as well as
$\suf{i}$ and  $\suf{\NSV[i]}$,
for all positions $i$ that a factor starts 
(See Algorithm~\ref{algo:pnsv_from_sa} in Appendix.
$\lcp(i, j)$ computes the length of the longest prefix between
$\suf{i}$ and $\suf{j}$ in $O(\lcp(i,j))$ time).
In order to compute a factor $f_j$, the algorithm compares
at most twice $|f_j|$ characters.
Since the sum of the length of all the factors is $N$,
the {\parstep} of the algorithm runs in linear time.
KKP3 needs 3 integer arrays, $\SA$, $\PSV$ and $\NSV$ arrays in the {\prestep},
and 2 integer arrays $\PSV$ and $\NSV$ in the {\parstep}.
Therefore KKP3 runs in linear time using
a total of 3 auxiliary integer arrays ($\SA,\PSV,\NSV$) of length $N$.

For KKP2, K\"{a}rkk\"{a}inen et al. show that the {\parstep} can be accomplished
by using only the $\NSV$ array.
The idea is based on a very interesting connection between
$\PSV$, $\NSV$, and $\Phia$ arrays.
They showed that starting from the $\NSV$ array,
it is possible to sequentially scan and rewrite the $\NSV$ array (consequently to the $\Phia$ array)
in-place,
during which, values of $\PSV$ (and naturally $\NSV$) for each position can be 
obtained sequentially as well.
\begin{lemma}[\cite{karkkainen_lz}]\label{lem:nsv2phi}
  Given the $\NSV$ array of a string $T$ of length $N$,
  $\PSV(i)$ and $\NSV(i)$ of $T$ can be sequentially obtained
  for all positions $i=1,\ldots, N$
  in $O(N)$ total time using 
  $O(1)$ space other than the $\NSV$ array and $T$.
\end{lemma}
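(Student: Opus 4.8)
The plan is to recast $\PSV$ and $\NSV$ as a pair of neighbor pointers in an incrementally built, lexicographically sorted list of suffixes, and then to observe that the only pointers we actually need to maintain can be stored inside the $\NSV$ array itself. First I would reinterpret the definitions: since $\SA[\RANK[i]] = i$, both $j_1$ and $j_2$ search for the nearest rank (to the left, resp.\ right, of $\RANK[i]$) whose suffix-array value is smaller than $i$, so $\PSV[i], \NSV[i] \in \{0,1,\ldots,i-1\}$. The suffixes whose starting position lies in $\{1,\ldots,i-1\}$ are precisely those with value smaller than $i$; hence $\PSV[i]$ is the immediate lexicographic predecessor of $\suf{i}$ among $\{\suf{1},\ldots,\suf{i-1}\}$, and $\NSV[i]$ is its immediate lexicographic successor among the same set (with $0$ denoting ``no such suffix'').

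This characterization yields an incremental algorithm. I would process the positions $i = 1, \ldots, N$ in increasing order, which corresponds to inserting the suffixes into a doubly linked list kept in lexicographic order. When $\suf{i}$ is inserted, its successor in the current list is exactly $\NSV[i]$, which is given; and its predecessor is whatever currently precedes $\NSV[i]$ in the list (or the current lexicographically largest element, when $\NSV[i]=0$). Thus it suffices to keep one predecessor pointer per position plus a single variable holding the current list tail: at step $i$ I read $\NSV[i]$, set $\PSV[i]$ to the stored predecessor of $\NSV[i]$ (or the tail), output the pair, and then relink by recording that the predecessor of $\NSV[i]$ is now $i$ and the predecessor of $i$ is $\PSV[i]$. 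Each step is $O(1)$, for $O(N)$ time overall.

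The step I expect to require the most care is storing these predecessor pointers in place, reusing the slots of the $\NSV$ array, so that only $O(1)$ extra space is used; I would justify this by an access-accounting argument. Slot $i$ is read as $\NSV[i]$ only at step $i$, and it cannot have been overwritten earlier, since an earlier write to slot $i$ would require some $i' < i$ with $\NSV[i'] = i$, contradicting $\NSV[i'] < i'$. After step $i$, slot $i$ holds a predecessor pointer, and from then on it is only ever accessed as such (it is read at a later step $i'$ exactly when $\NSV[i'] = i$, necessarily with $i' > i$). Symmetrically, when at step $i$ I write the value $i$ into slot $\NSV[i] < i$, that slot's own $\NSV$ entry was already consumed at the earlier step $\NSV[i]$, so nothing needed later is destroyed. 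Correctly dispatching the boundary cases — $\NSV[i]=0$ through the tail variable, and the lexicographically smallest suffix whose predecessor is the sentinel — then completes the correctness argument.

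Finally I would note that the in-place rewriting leaves the array equal to $\Phia$: after all insertions the stored predecessor of each position $j$ is its immediate lexicographic predecessor among \emph{all} suffixes, namely $\Phia[j] = \SA[\RANK[j]-1]$ (with the sentinel $0$ marking $\SA[1]$), matching the ``consequently to the $\Phia$ array'' claim. The principal obstacle is therefore not the time bound, which is immediate, but verifying that the single array can simultaneously play the roles of input $\NSV$ array, working predecessor list, and output $\Phia$ array without any slot being needed for two purposes at once; the read/write bookkeeping sketched above is exactly what guarantees this.
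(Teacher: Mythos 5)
Your proposal is correct and follows essentially the same route as the paper, which states this lemma only by citation to K\"arkk\"ainen et al.\ and summarizes the idea in one sentence: sequentially rewrite the $\NSV$ array in-place into the $\Phia$ array while emitting $\PSV(i)$ and $\NSV(i)$ for each $i$. Your linked-list insertion argument---using $\NSV[i]<i$ to show slot $i$ is still unread at step $i$ and that the overwritten slots hold exactly the predecessor pointers needed later, ending with the array equal to $\Phia$---is precisely the mechanism behind the cited result, so nothing is missing.
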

By making use of this technique, only the $\NSV$ array is now required for the {\parstep}.
KKP2 uses 2 integer arrays ($\SA$ and $\NSV$) in the {\prestep},
and 1 integer array ($\NSV$) in the {\parstep}, and thus in summary,
KKP2 runs in linear time using a total of 2 auxiliary integer arrays of length $N$.

We can see that the memory bottleneck of KKP2 is in the {\prestep}, 
i.e., the computation of the $\NSV$ array, where 
the space for $\SA$ is required as input, and the space for $\NSV$ is required for output.
This is because 
elements of $\SA$ are in lexicographic order and elements of $\NSV$ are in text order.
Although the scanning on $\SA$ can be sequential,
the writing to $\NSV$ is not, and both arrays must exist simultaneously.
KKP1 partly overcomes this problem, by first storing $\SA$ to disk,
and then streams the $\SA$ from the disk,
storing only the $\NSV$ array in main memory.
Thus,
KKP1 runs in linear time keeping only 1 auxiliary integer
array of length $N$ in {\em main} memory,
although of course, 
the total storage requirement is still 2 integer arrays ($\SA$ and $\NSV$).

\section{New Algorithm using a single integer array}
In this section, we describe our linear time LZ77 factorization algorithm 
that uses only a single auxiliary integer array of length $N$.
As described in the previous section, once the $\NSV$ array has been obtained,
the {\parstep} can be performed within the time and space requirements
due to Lemma~\ref{lem:nsv2phi}.
What remains is how to compute $\NSV$ using only a single integer array,
including the $\NSV$ array itself.

Our algorithm achieves this in two steps. 
We first show in Section~\ref{subsec:nsvtophi} that,
given the $\Phia$ array, $\NSV$ can be computed in linear time and $O(1)$ extra space,
by rewriting $\Phia$ array in-place.
Then, we show in Section~\ref{subsec:inplacephi} that,
given $T$, the $\Phia$ array can be computed in linear time and $O(1)$ extra space.
By combining the two algorithms, we obtain our main result.
\begin{theorem}
  Assuming a constant size integer alphabet,
  the LZ77 factorization of a string of length $N$ can be computed in $O(N)$ time
  using of $N\log N + O(1)$ bits of total working space, i.e.,
  a single auxiliary integer array of length $N$.
\end{theorem}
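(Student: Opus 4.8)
The plan is to assemble the theorem from three pieces that decompose the
computation of the LZ77 factorization into a pipeline of in-place array
rewritings, so that at every moment only one integer array of length $N$ is
resident. The overall strategy is: first build the $\Phia$ array from $T$,
then rewrite $\Phia$ into the $\NSV$ array, and finally run the {\parstep}
on top of $\NSV$. Since each stage overwrites the array produced by the
previous stage, the space never exceeds $N\log N + O(1)$ bits. I would state
the proof as a straightforward chaining of the three lemmas, being careful
to verify that the input of each stage is exactly the output of the
preceding one and that no stage secretly needs a second array.

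First I would invoke the result of Section~\ref{subsec:inplacephi}:
given $T$, the $\Phia$ array can be computed in $O(N)$ time using $O(1)$
extra space. This produces the single integer array in its first incarnation.
Second, I would invoke the result of Section~\ref{subsec:nsvtophi}: given
$\Phia$, the $\NSV$ array can be computed in $O(N)$ time and $O(1)$ extra
space by rewriting $\Phia$ in-place. The key observation to emphasize here
is that the second stage consumes precisely the array the first stage
produced, and writes its output into the same memory, so the transition
from $\Phia$ to $\NSV$ incurs no additional length-$N$ array. Third, I would
apply Lemma~\ref{lem:nsv2phi}: given the $\NSV$ array, the values $\PSV(i)$
and $\NSV(i)$ can be obtained sequentially for all $i = 1,\ldots,N$ in $O(N)$
total time using $O(1)$ space beyond $\NSV$ and $T$. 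As the {\parstep}
compares at most $2|f_k|$ characters per factor $f_k$ and $\sum_k |f_k| = N$,
the parsing phase runs in $O(N)$ time while emitting each factor
sequentially, which by our convention is not charged to the working space.

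Putting these together, the total running time is a sum of three $O(N)$
terms, hence $O(N)$. The total working space at any instant is a single
length-$N$ integer array, which under a constant size integer alphabet and
the word model occupies $N\log N$ bits, together with $O(1)$ extra words
accounting for $O(1)$ additional bits; this gives the claimed
$N\log N + O(1)$ bits. The main obstacle in the full argument is not this
chaining itself, which is routine, but rather the two in-place rewriting
lemmas of Sections~\ref{subsec:nsvtophi} and~\ref{subsec:inplacephi} that
the theorem relies on; in the proof of the theorem proper I would simply
cite them, since here the only thing to check is that the three stages
compose correctly with the array handed off from one to the next and that
the $O(1)$ extra-space budget of each stage is genuinely additive rather
than concurrent.
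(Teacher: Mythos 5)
Your proposal is correct and follows essentially the same route as the paper: the theorem is obtained exactly by chaining the in-place construction of $\Phia$ from $T$ (Section~\ref{subsec:inplacephi}), the in-place rewriting of $\Phia$ into $\NSV$ (Lemma~\ref{lem:phi2nsv}), and Lemma~\ref{lem:nsv2phi} for the parsing step, with each stage overwriting the single length-$N$ array. Your added checks---that the stages hand off the same array, that the extra-space budgets are additive, and that the parsing cost telescopes to $O(N)$---match the paper's argument, so nothing further is needed.
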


We call the algorithm that uses two integer arrays by incorporating the former technique, BGtwo, and the algorithm that uses only a single integer array by incorporating both techniques, BGone.
(See Figure~\ref{fig:flow_chart})

\subsection{In-place computation of the $\NSV$ array from the $\Phia$ array}\label{subsec:nsvtophi}
Since $\Phia[i]$ for each $i$ 
indicates lexicographic predecessor of $\suf{i}$,
we can sequentially access values of $\SA$ from right to left,
by accessing the $\Phia$ starting from the lexicographically largest suffix,
which is $\Phia[0]$.
More precisely, since the $\SA$ is a permutation of the integers $1,\ldots,N$,
$\Phia$ can be regarded as an array based implementation of
a singly linked list, linking the elements of $\SA$ from right to left.
Thus, the algorithm for computing $\NSV$ from $\SA$ can be simulated using the
$\Phia$ array.
An important difference is that while elements of $\SA$ are in lexicographic order,
elements of $\Phia$ are in text order, which is the same as $\NSV$.
Also, since the access on $\SA$ is sequential, the value $\Phia[i]$
is not required anymore after it is processed,
and we can rewrite $\Phia[i]$ to $\NSV[i]$.
The pseudo code of the algorithm is shown in Algorithm~\ref{algo:nsv_from_phi}.
The correctness and running time follows from the above arguments.

\begin{lemma}
  \label{lem:phi2nsv}
  Given the $\Phia$ array of a string $T$,
  $\NSV$ array of $T$ can be computed from $\Phia$ in linear time and in-place
  using $O(1)$ working space.
\end{lemma}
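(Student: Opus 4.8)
The plan is to show that the textbook linear-time algorithm that computes $\NSV$ by scanning $\SA$ from right to left can be simulated using the $\Phia$ array as its only data structure, and, crucially, that the auxiliary stack this algorithm relies on can be embedded in the very cells that are being overwritten. First I would establish the traversal order: since $\Phia[j]$ is the lexicographic predecessor of $\suf{j}$, the array $\Phia$ is an array-based singly linked list chaining the suffixes from the lexicographically largest down to the smallest, so that following $\Phia$ from $\Phia[0]$ enumerates the text positions $\SA[N],\SA[N-1],\ldots,\SA[1]$, i.e.\ scans $\SA$ from right to left. This is exactly the order needed for $\NSV$: when the position $p=\SA[\RANK[p]]$ is reached, every position of larger rank has already been visited, and $\NSV[p]$ is by definition the nearest already-visited (larger-rank) position whose text index is smaller than $p$. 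This value is obtained by maintaining a stack whose entries, read from top to bottom, are strictly decreasing text positions: upon reaching $p$ one pops every entry that is $\geq p$, and the resulting top is $\NSV[p]$.

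The key step, and the source of the $O(1)$-space bound, is the observation that this stack need not be stored separately. When $p$ is pushed, the entry lying immediately below it on the stack is precisely $\NSV[p]$; hence the $\NSV$ values the algorithm produces already encode the down-pointers of the stack, and the whole stack is recovered from a single variable $\mathit{top}$ holding its current top together with the chain of $\NSV$ pointers. Thus at position $p$ the algorithm (i) reads the traversal successor $\mathit{next}=\Phia[p]$, (ii) pops by repeatedly setting $\mathit{top}\leftarrow\Phia[\mathit{top}]$ while $\mathit{top}\geq p$, (iii) records $\NSV[p]$ as the resulting top and writes it into $\Phia[p]$ (this write \emph{is} the push), sets $\mathit{top}=p$, and (iv) continues at $\mathit{next}$; an empty stack yields the sentinel value for $\NSV[p]$.

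The point I expect to require the most care is verifying that overwriting $\Phia[p]$ by $\NSV[p]$ never destroys information still needed, i.e.\ that the two roles of the array do not collide. This splits into two disjoint access patterns. The traversal reads each cell $\Phia[p]$ exactly once, at the moment $p$ is visited, and step (i) reads it strictly before step (iii) overwrites it, while the cell is left untouched before the visit. The stack operations, on the other hand, follow $\Phia[\mathit{top}]$ only for positions $\mathit{top}$ of larger rank than $p$, which have already been visited and therefore already hold their final $\NSV$ value, i.e.\ the correct down-pointer. Because the stack thus reads only already-overwritten cells and the traversal reads only not-yet-overwritten cells, no conflict can arise, and the computation is genuinely in place.

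Finally I would bound the resources. The traversal of $\Phia$ costs $O(N)$, and since each position is pushed at most once and popped at most once, the total number of stack operations is $O(N)$ by a standard amortization argument; the only extra storage is the constantly many index variables $p$, $\mathit{next}$, and $\mathit{top}$. Combining the correctness of the embedded-stack simulation with these bounds gives the claimed linear time and $O(1)$ extra space, establishing the lemma.
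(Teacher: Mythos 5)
Your proposal is correct and follows essentially the same route as the paper: it simulates the right-to-left scan of $\SA$ by traversing the $\Phia$ linked list from $\Phia[0]$, performs peak elimination with a stack threaded through the already-written $\NSV$ values (exactly the role of $prev \leftarrow \NSV[prev]$ in the paper's Algorithm~\ref{algo:nsv_from_phi}), and overwrites each cell only after its single traversal read. Your explicit verification that the stack reads only already-rewritten cells while the traversal reads only not-yet-rewritten ones is a welcome fleshing-out of the paper's brief remark that sequential access makes $\Phia[i]$ disposable after processing, but it is the same argument, not a different one.
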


\begin{algorithm2e}[t]
  \caption{In-place computation of $\NSV$ from $\Phia$.}
  \label{algo:nsv_from_phi}
  \SetKwInOut{Input}{Input}\SetKwInOut{Output}{Output}
  \Input{$\Phia$ array (denoted as $\NSV$)}
  $cur \leftarrow \NSV[0]$ \tcp*[l]{$\Phia[0]$: lexicographically largest suffix}
  $prev \leftarrow 0$ \;
  \While{$cur \neq 0$}{
    \While{$cur < prev$}{
      $prev \leftarrow \NSV[prev]$ \tcp*[l]{peak elimination}
    }
    $next \leftarrow \NSV[cur]$ \tcp*[l]{$\Phia[cur]$}

    $\NSV[cur] \leftarrow prev$ \;
    $prev \leftarrow cur$ \;
    $cur \leftarrow next$ \;
  }
\end{algorithm2e}

\begin{figure}[h]
  \centerline{\includegraphics[width=0.8\textwidth]{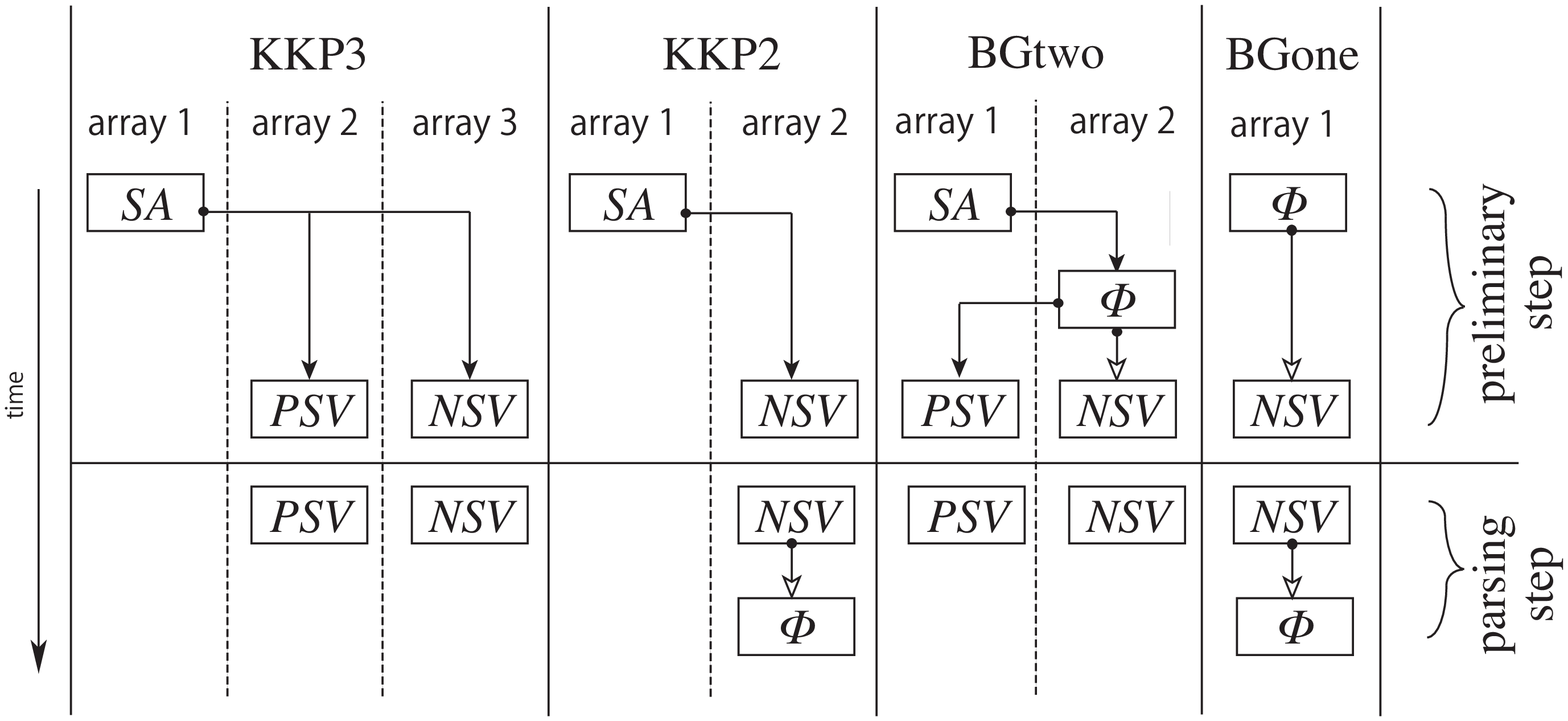}}
  \caption{
    A comparison of the auxiliary arrays used and how their contents
    change with time for the KKP variants and our algorithm.
  }
  \label{fig:flow_chart}
\end{figure}

\subsection{Computing the $\Phia$ array using $O(1)$ working space}
\label{subsec:inplacephi}
In the previous section, we showed that the $\NSV$ array
can be computed from the $\Phia$ array in-place in linear time.
By combining Lemma~\ref{lem:nsv2phi} and Lemma~\ref{lem:phi2nsv},
if the $\Phia$ array is given,
we can compute the LZ-factorization in linear time
by rewriting $\Phia$ array to $\NSV$ array in-place,
and rewriting $\NSV$ array to $\Phia$ array in-place (and sequentially obtain $\NSV$ and $\PSV$ values),
using only constant extra working space.
The problem is now how to compute the $\Phia$ array.
Although the $\Phia$ array can easily be computed in linear time by a naive sequential scan on $\SA$,
storage for both the input $\SA$ and output $\Phia$ array is required for such an approach,
as in the case of computing $\NSV$ from $\SA$.
As far as we know, an in-place linear time construction algorithm for the $\Phia$ 
array has not yet been proposed.
Below, we propose the first such algorithm.

As noted in the previous subsection, the $\Phia$ array can be
considered as an alternative representation of $\SA$, which allows
us to simulate a sequential scan on the $\SA$. Thus, in order to
construct $\Phia$ in-place, our algorithm simulates the in-place 
suffix array construction algorithm by Nong~\cite{nong13_sacak}
which runs in linear time on constant size integer alphabets.
We first describe the outline of the algorithm by Nong for computing $\SA$,
and then describe how to modify this to compute the $\Phia$ array.

\subsubsection{Construction of the suffix array by induced sorting~\cite{nong13_sacak}}
\label{subsec:nong}
Nong's algorithm is based on induced sorting,
which is a well known technique for linear time suffix sorting.
Induced sorting algorithms first sort a certain subset of suffixes,
either directly or recursively, and then induces the lexicographic order
of the remaining suffixes by using the lexicographic order of the subset.
There exist several methods depending on which subset of suffixes to choose. 
Nong's algorithm utilizes the concept of LMS suffixes defined below.
\begin{definition}
  For $1 \leq i \leq N$,
  a suffix $\suf{i}$ is an L-suffix if $\suf{i}$ is lexicographically larger than $\suf{i+1}$, 
  and an S-suffix otherwise.
  We call S or L the {\em type} of the suffix.
  An S-suffix $\suf{i}$ is a Left-Most-S-suffix (LMS-suffix)
  if $\suf{i}$ is an S-suffix and $\suf{i-1}$ is an L-suffix.
\end{definition}
Recall that $T[N+1]=\$$, where $\$$ is a special delimiter character that
does not occur elsewhere in the string.
We define $\suf{N+1}$ to be an S-suffix.
Notice that for $i \leq N$,
$\suf{i}$ is an S-suffix iff $T[i]<T[i+1]$,
or $T[i]=T[i+1]$ and $\suf{i+1}$ is an S-suffix.
The type of each suffix can be determined by scanning $T$ from right
to left.

In $\SA$, all suffixes starting with the same character $c$ occur
consecutively, and we call the interval on the suffix array of such
suffixes, the $c$-interval.
A simple observation is that the L-suffixes that start with
some character $c$ must be lexicographically smaller than
all S-suffixes that start with the same character $c$.
Thus a $c$-interval can be partitioned into to two sub-intervals,
which we call the L-interval and S-interval of $c$.

The induced sorting algorithm consists of the following steps.
We denote the working array to be $\SA$, which will become the 
suffix array of the text at the end of the algorithm.
\begin{enumerate}
\item 
  Sort the LMS-suffixes.\\
  We call the result $\LMSSA$.
  \label{sais:lms_sort}
  We omit details of how this is computed,
  since our algorithm will use the algorithm described in~\cite{nong13_sacak} as is,
  but it may be performed in linear time using $O(1)$ extra working space.
  We assume that the result $\LMSSA$ is stored in the first
  $k$ elements of $\SA$, i.e. $\SA[1..k]$,
  where $k$ is the number of LMS-suffixes.  
\item
  Put each LMS-suffix into the S-interval of its first character,
  in the same order as $\LMSSA$. \label{sais:putlms}\\
  We scan $T$ from right to left, and for each $c\in\Sigma$,
  compute and store the number of L-suffixes and S-suffixes, that start with $c$.
  We also compute the number of suffixes that start with a character that
  is lexicographically smaller than $c$.
  Storing these values requires only constant space,
  since we assume a constant size alphabet.
  From these values, we can determine the start and end positions
  of the L-interval and S-interval for any $c$. 
  Initially, all intervals are marked empty.
  By also maintaining a pointer to the left-most or right-most empty 
  element in an interval, adding elements to an L-interval or S-interval
  can also be performed in $O(1)$ time using $O(1)$ extra space.
  By a right to left scan on $\LMSSA$ (i.e. $\SA[1..k]$),
  we put each LMS-suffix in the right most empty element
  of the S-interval of the corresponding character

\item
  Sort and put the L-suffixes in their proper positions in $\SA$.\\ 
  This is done by scanning $\SA$ from left to right.
  For each position $i$, if $\SA[i] > 1$ and $\suf{\SA[i]-1}$ is an L-suffix, 
  $\suf{\SA[i]-1}$ is put in the left-most empty position of the L-interval for
  character $T[\SA[i]-1]$. 
  The correctness of the algorithm follows from the fact that 
  if suffix $\suf{\SA[i]-1}$ is an L-suffix, then, 
  $\suf{\SA[i]}$ must have been located before $i$ (in the correct order), in $\SA$.

  \label{sais:ltype_sort}
  
\item Sort and put the S-suffixes in their proper positions in $\SA$.\\
  This is done by scanning $\SA$ from right to left.
  For a position $i$, if $SA[i] > 1$ and $\suf{SA[i]-1}$ is an S-suffix,
  $\suf{SA[i]-1}$ is put in the right most empty position of the S-interval for character $T[SA[i]-1]$.
  \label{sais:stype_sort}
  The correctness of the algorithm follows from the fact that 
  if suffix $\suf{\SA[i]-1}$ is an S-suffix, then, 
  $\suf{\SA[i]}$ must have been located after $i$ (in the correct position), in $\SA$.

\end{enumerate}

In total, the algorithm computes suffix array in linear time
using only a single integer array and constant extra working space.
Note that for any position $i$, determining whether suffix $\suf{\SA[i]-1}$ 
is an L-suffix or not, can be done in $O(1)$ time using no extra space.
If $T[\SA[i]-1] < T[\SA[i]]$ then it is an S-suffix, and
if $T[\SA[i]-1] > T[\SA[i]]$ then it is an L-suffix.
For the case of $T[\SA[i]-1] = T[\SA[i]]$, the type of 
$\suf{\SA[i]-1}$ is the same as that of $\suf{\SA[i]}$,
which can be determined by the position $i$, and the start and end positions
of the L and S-intervals of character $T[\SA[i]]$.

\subsubsection{Construction of the $\Phia$ array by induced sorting}
\label{sec:const_phi}
We regard $\Phia$ as an array based implementation of
a singly linked list containing elements of $\SA$ from right to left.
The basic idea of our algorithm to construct the $\Phia$ array is to
modify Nong's algorithm for computing $\SA$, to use this list representation instead.
However, there are some technicalities that need to be addressed.

We denote the working array to be $\Array$,
which will be an array based representation of a singly linked list that links
(in lexicographic order) the set of so-far sorted suffixes at each step,
and will become the $\Phia$ array of the text at the end of the algorithm.
The algorithm is described below.

\begin{enumerate}
\item 
  Sort the LMS-suffixes.\\
  First, we sort LMS-suffixes in the same way as~\cite{nong13_sacak}.
  The result will be called $\LMSSA$ and stored in $\Array[1..k]$,
  where $k$ is the number of LMS-suffixes.

\item  \label{phiis:arrange_lms}
  Put each LMS-suffix into the S-interval of its first character,
  in the same order as $\LMSSA$. \label{phiis:putlms}\\
  In this step, we transform $\LMSSA$ to the array based linked list representation,
  so that
  for each LMS-suffix $\suf{\LMSSA[i]}$, its
  lexicographically succeeding LMS-suffix $\suf{\LMSSA[i+1]}$ 
  will be put in $\Array[\LMSSA[i]]$,
  i.e., $\Array[\LMSSA[i]]=\LMSSA[i+1]$ for $i < k$.
  If $\LMSSA$ and $\Array$ were different arrays, then we could simply
  set $\Array[\LMSSA[i]] = \LMSSA[i+1]$ for each $i < k$.
  The problem here is that since $\LMSSA$ is stored in $\Array[1..k]$,
  when setting a value at some position of $\Array$, we may overwrite 
  a value of $\LMSSA$ which has not been used yet. We overcome this problem as follows.
  
  First, we memorize $\LMSSA[1]$, the first value of $\LMSSA$.
  Then, for $1 \leq i \leq k$,
  we set $\Array[2i]=\LMSSA[i]$ and $\Array[2i-1] = \EMPTY$
  by scanning $\Array[1..k]$ from right to left.
  Since $k$ never exceeds $N/2$, we have $2i \leq N$ for all $1\leq i\leq k$.

  Next, for $1 \leq i \leq k-1$,
  let $j_1 = \Array[2i] (= \LMSSA[i])$ and $j_2 = \Array[2(i+1)] (= \LMSSA[i+1])$.
  We attempt to set $\Array[j_1] = j_2$ .
  If $\Array[j_1]=\EMPTY$, then we simply set $\Array[j_1]=j_2$.
  Otherwise $j_1 = 2i^\prime$ for some $1\leq i^\prime \leq k$,
  and $\Array[j_1]$ stores the value $\LMSSA[i^\prime]$.
  Therefore, we do not overwrite this value, but instead,
  borrow the space immediately preceding position $j_1$, and
  set $\Array[j_1-1]=j_2$.
  An important observation is that $\Array[j_1-1]$ must have been $\EMPTY$,
  because LMS-suffixes cannot, by definition, start at consecutive positions,
  and if $j_1$ was an LMS suffix, $j_1-1$ cannot be an LMS suffix and the
  algorithm will never try to set another value at this position.

  After this, we set $\Array[2i]=\EMPTY$ for all $1 \leq i \leq k$,
  and we arrange the remaining values to their correct positions by
  attempting to traverse succeeding suffixes stored in $\Array$ from
  the lexicographically smallest suffix of
  $\LMSSA$ memorized at the beginning of the process.
  Let $i$ be the current position we are traversing.
  We attempt to obtain its succeeding suffix by reading $\Array[i]$.
  If $\Array[i] \neq \EMPTY$, the succeeding suffix of $\suf{i}$ was stored at correct position,
  and we continue with the next position $\Array[i]$.
  If $\Array[i] = \EMPTY$, then the succeeding suffix of $\suf{i}$ may be stored at
  the immediately preceding position, i.e. $\Array[i-1]$.
  In such a case, $\Array[i-1] \neq \EMPTY$, and
  we set $\Array[i]=\Array[i-1]$ and $\Array[i-1]=\EMPTY$,
  and continue with the next position $\Array[i]$.
  If $\Array[i-1] = \EMPTY$, this means that $\suf{i}$ is the lexicographically 
  largest suffix of LMS-suffixes, and we finish the process.

  In this way, for all LMS-suffixes $\suf{i}$,
  we can set the succeeding suffix at $\Array[i]$.
  The process essentially scans the values of $\LMSSA$ on $\Array$ twice.
  Therefore, this step runs in $O(k)$ time and $O(1)$ working space.
\item
  \label{phiis:ltype_sort}
  Sort and put the L-suffixes in their proper positions in $\Array$.\\ 
  To simulate the algorithm for $\SA$,
  we need to scan the suffixes in lexicographically increasing order by
  using $\Array$.
  Let $\suf{i}$ be a suffix the algorithm is processing.
  We want to set $\Array[j]=i-1$ if $\suf{i-1}$ is an L-suffix,
  and $\suf{j}$ is the suffix that lexicographically precedes suffix $\suf{i-1}$.

  To accomplish this, we introduce four integer arrays of size  $|\Sigma|$ each,
  $\Lbkts[c]$, $\Lbkte[c]$, $\Sbkts[c]$ and $\Sbkte[c]$.
  $\Lbkts[c]$ and $\Lbkte[c]$
  store the lexicographically smallest and largest suffix
  of the L-interval for a character $c$ which have been inserted into
  $\Array$,
  and $\Sbkts[c]$ and $\Sbkte[c]$ are the same for each S-interval.
  All values are initially set to $\EMPTY$.
  We first scan the list of LMS suffixes in lexicographically increasing 
  order represented in $\Array$ constructed in the previous step,
  and insert each LMS suffixes into the corresponding S-interval,
  by updating $\Sbkts[c]$ and $\Sbkts[e]$.
  Then, we scan all LMS- and L-suffixes in lexicographically increasing order
  by traversing the succeeding suffixes on $\Array$ by
  starting from $\Lbkts[c]$, 
  traversing the list represented by $\Array$ 
  until we process $\Lbkte[c]$.
  Then we do the same starting from $\Sbkts[c]$ and process the 
  suffixes until we reach $\Sbkte[c]$, 
  and repeat the process for all character $c$ in lexicographic order.  
  
  Let $\suf{i}$ be a suffix the algorithm is currently processing.
  We store $\suf{i-1}$ in the appropriate position of $\Array$,
  if $\suf{i-1}$ is an L-suffix, and do nothing otherwise.
  Since we know the type of suffix $\suf{i}$ since we are either processing
  a suffix between $\Lbkts[c]$ and $\Lbkte[c]$ or $\Sbkts[c]$ and $\Sbkte[c]$,
  the type of $\suf{i-1}$ can be determined in constant time by simply comparing
  $T[i-1]$ and $T[i]$, i.e. it is an L-suffix if $T[i-1] > T[i]$, 
  an S-suffix if $T[i-1] < T[i]$, and has the same type as $\suf{i}$
  if $T[i-1] = T[i]$.

  When storing $\suf{i-1}$ in $\Array$, we check $\Lbkts[T[i]]$.
  If $\Lbkts[T[i-1]]=\EMPTY$, then, $\suf{i-1}$ is the lexicographically smallest
  suffix starting with $T[i-1]$. 
  We set $\Lbkts[T[i-1]]=\Lbkte[T[i-1]]=i-1$.
  Otherwise, there is at least one suffix lexicographically smaller than
  $\suf{i-1}$ in the L-interval for character $T[i-1]$.
  This suffix is $\Lbkte[T[i-1]] = j$, and
  we set $\Array[j]=i-1$, and update $\Lbkte[T[i-1]]=i-1$.

  In this way we can compute all the lexicographically succeeding suffix
  of each L-suffixes in the corresponding L-interval,
  and store them in $\Array$.
  Since the number of times we read the values of $\Array$
  is at most the number of LMS- and L-suffixes,
  and the updates for each new L-suffix can be done in $O(1)$ time,
  the algorithm runs in linear time using only a single integer array
  and $O(1)$ working space in total.
  
\item Sort and put the S-suffixes in their proper positions in $\Array$.\\
  To simulate the algorithm for $\SA$,
  we need to scan all L-suffixes in lexicographically 
  decreasing order by using $\Array$.
  However, since the linked list of L-suffixes constructed on $\Array$ 
  in the previous step
  is in increasing order, we first rewrite $\Array$ to reverse the direction of 
  the links.
  That is, we want to set $\Array[j]=i-1$ if $\suf{i-1}$ is an L-suffix
  and $\suf{j}$ is the suffix that lexicographically succeeds suffix
  $\suf{i-1}$.

  This rewriting can be done by scanning the succeeding suffixes in a
  similar way as that of Step~\ref{phiis:ltype_sort}.
  For each $c$ in lexicographically increasing order, traverse the L-suffixes by
  using $\Lbkts[c], \Lbkte[c]$, and $\Array$, and simply
  rewrite the values in $\Array$ to reverse the links, i.e.,
  if $\suf{j}$ preceded $\suf{i}$ then $\Array[i] = j$.
  
  Now we have a lexicographically decreasing list of L-suffixes represented in $\Array$,
  and want to insert the S-suffixes into $\Array$.
  The process is similar to that of Step~\ref{phiis:ltype_sort}.
  Initially the values for $\Sbkts[c]$ and $\Sbkte[c]$ for all $c$ are set to $\EMPTY$.
  Then, for each $c$ in lexicographically {\em decreasing} order,
  we traverse preceding suffixes on $\Array$ by
  starting from $\Sbkte[c]$, 
  traversing the list represented by $\Array$ 
  until we process $\Sbkts[c]$.
  Then we do the same starting from $\Lbkte[c]$ and process the 
  suffixes until we reach $\Lbkts[c]$, and so on.
  Let $\suf{i}$ be a suffix the algorithm is currently processing.
  If $\suf{i-1}$ is an S-suffix, we store $\suf{i-1}$ in the appropriate
  position of $\Array$ and update $\Sbkts[c]$ and $\Sbkte[c]$ accordingly,
  and do nothing otherwise.
  A minor detail during this process is that we also link preceding
  suffixes which are in different S or L intervals.  

  Now that all suffixes have been inserted and linked, we can obtain all suffixes
  in decreasing order by traversing preceding suffixes on $\Array$, 
  i.e. $\Array$ is now equal to the $\Phia$ array.
  Similarly to the previous step, we can see that this step runs in linear time using
  one integer array of length $N$ ($\Array$) and $O(1)$ extra space.

\end{enumerate}
All steps run in linear time using $\Array$ and $O(1)$ extra space,
thus giving a linear time algorithm for computing $\Phia$ using $O(1)$ extra working space.

The above procedure describes how to construct $\Phi$ from $T$ using only a 
single integer array of length $N$.
We propose another variant of the algorithm that, given $\SA$,
computes the $\Phi$ by rewriting $\SA$ in-place
in linear time and $O(1)$ extra working space.
The idea may seem useless at a glance, but may have applications
when the $\SA$ is already available, since the conversion does not
require the expensive recursion step as in the linear time $\SA$ construction 
algorithm (in Step 1), but can be achieved in a few scans.

\begin{lemma}
  Given the $\SA$ of a string $T$ of length $N$,
  $\Phia$ array of $T$ can be computed from $\SA$ in $O(N)$ time and in-place
  using $O(1)$ working space.
\end{lemma}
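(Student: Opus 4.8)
The plan is to reduce this to the $\Phia$-construction procedure of Section~\ref{sec:const_phi}, reusing it essentially as a black box and skipping only its expensive first step. Recall that Steps~2--4 of that section transform a sorted list of LMS-suffixes $\LMSSA$, stored in $\Array[1..k]$, into the $\Phia$ array in place, in $O(N)$ time and $O(1)$ extra working space; the only part that requires recursion is Step~1, the sorting of the LMS-suffixes. The key observation is that when $\SA$ is already available, this sorted order is free: the subsequence of LMS-suffixes appearing within $\SA$ is, by definition, exactly $\LMSSA$, since $\SA$ lists all suffixes in lexicographic order and a subsequence of a sorted sequence is itself sorted. Hence the plan is to extract $\LMSSA$ from the given $\SA$ by an in-place compaction, and then invoke Steps~2--4 verbatim on the resulting array.

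First I would scan $T$ once to compute, for each character $c$, the sizes of its L- and S-intervals and thus all interval boundaries, keeping these in $O(1)$ space (the alphabet is constant), exactly as done in Section~\ref{sec:const_phi}. Then I would scan the given array (holding $\SA$) from left to right, maintaining a write pointer $w$ initialised to $1$. At read position $i$, let $j=\SA[i]$. The type of $\suf{j}$ is obtained in $O(1)$ time from whether $i$ lies in the L- or S-subinterval of character $T[j]$, and the type of $\suf{j-1}$ is decided by comparing $T[j-1]$ and $T[j]$ (when they are equal, $\suf{j-1}$ inherits the type of $\suf{j}$). Thus I can test in $O(1)$ time whether $\suf{j}$ is an LMS-suffix, and if so I write $\SA[i]$ to $\Array[w]$ and increment $w$. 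Because $w$ is incremented at most once per read step, we always have $w\le i$, so the compaction never overwrites an entry of $\SA$ that has not yet been read, and it is therefore correct in place. At the end, $\Array[1..k]=\LMSSA$ in lexicographic order, which is precisely the input expected by Step~2 of Section~\ref{sec:const_phi}.

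Finally, running Steps~2--4 of Section~\ref{sec:const_phi} rewrites $\Array$ into the $\Phia$ array in $O(N)$ time and $O(1)$ extra space, so the whole procedure is a constant number of linear scans, as claimed. The parts that need genuine verification are all in the compaction phase rather than in the reused machinery: that the LMS-subsequence of $\SA$ is sorted (immediate), that both relevant suffix types are decidable in $O(1)$ space from the precomputed interval boundaries and from $T$, and that the write pointer never overtakes the read pointer. The remaining obstacle is purely bookkeeping, namely reconciling the sentinel/boundary conventions so that the single value $\Phia[\SA[1]]$ is set consistently with Section~\ref{sec:const_phi}; once this is checked, correctness, the $O(N)$ running time, and the $O(1)$ extra-space bound are all inherited directly from that construction, and the expensive recursive Step~1 of the full $\Phia$-from-$T$ algorithm is entirely avoided.
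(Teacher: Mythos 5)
Your proposal is correct and matches the paper's own proof essentially step for step: both extract $\LMSSA$ by a left-to-right in-place compaction of $\SA$ (using the precomputed L-/S-interval boundaries to test the LMS property in $O(1)$ time, with the write pointer trailing the read pointer) and then invoke Steps~2--4 of the $\Phia$-construction of Section~\ref{sec:const_phi} unchanged. Your explicit verification that $w \le i$ and your note on skipping the recursive Step~1 are exactly the points the paper relies on, stated slightly more carefully.
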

\begin{proof}
  It suffices to compute $\LMSSA$, since then we can run the above algorithm
  from Step~\ref{phiis:arrange_lms}.
  We scan $T$ from right to left, and for each character $c$,
  count the number of L- and S-suffixes that start with $c$,
  and obtain the L- and S-interval for each character $c$ on $\SA$.
  Let $k$ be a counter of the number of LMS suffixes initially set to $0$.
  We then scan $\SA$ from left to right for $1 \leq i \leq N$.
  If $i$ is within an S-interval and $T[\SA[i]] < T[\SA[i]-1]$,
  then, $\suf{\SA[i]}$ is an LMS-suffix and
  we store it in $\SA[k+1]$, and increment $k$.

  In this way, we can obtain $\LMSSA$ and also $\SA$
  by applying Step~\ref{sais:putlms}-\ref{sais:stype_sort}
  in $O(N)$ time and
  $O(1)$ extra working space.
\end{proof}

\subsection{In-place computation of $\SA$ from the $\Phia$ array}
An advantage of the KKP algorithms compared to BGone may be that 
$\SA$ is left untouched after the LZ-factorization.
On the other hand, the $\Phia$ array is left after running BGone.
Actually, it is possible to show that the $\Phia$ array can be converted back to $\SA$ in 
linear time and in-place, using $O(1)$ extra working space.
The proof of the following lemma is given in the Appendix.
\begin{lemma}
  \label{lem:phi2sa}
  Given a string $T$ and its $\Phia$ array,
  the $\SA$ array of $T$ can be computed
  in linear time and in-place using $O(1)$ working space.
\end{lemma}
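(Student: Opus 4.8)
The plan is to proceed in two in-place stages, exploiting the fact that $\Phia$ and $\SA$ encode mutually inverse information. Recall that $\Phia$ is an array-based singly linked list that links the text positions in order of \emph{decreasing} lexicographic rank of their suffixes: starting from the sentinel entry corresponding to the lexicographically largest suffix and repeatedly following $\Phia$, one visits the text positions $\SA[N], \SA[N-1], \ldots, \SA[1]$ in that order. I would first rewrite the input array (call it $\Array$, initially holding $\Phia$) into the inverse suffix array $\RANK$, and then invert the permutation $\RANK$ in place to obtain $\SA$, since $\SA = \RANK^{-1}$.

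First I would compute $\RANK$ from $\Phia$ by a single traversal of the linked list. Walking from the sentinel, let $cur$ be the text position visited at the $i$-th step, where $i$ runs from $N$ down to $1$; by the above, $cur = \SA[i]$, so $\RANK[cur] = i$. The key point is to read $next \leftarrow \Array[cur]$ (the $\Phia$ link) \emph{before} overwriting $\Array[cur] \leftarrow i$. Since the traversal visits each text position exactly once and every read precedes the corresponding write, no link that is still needed is ever destroyed; this is essentially Algorithm~\ref{algo:nsv_from_phi} without the peak-elimination inner loop. After the traversal, $\Array = \RANK$, computed in $O(N)$ time and $O(1)$ extra words.

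Next I would invert the permutation now stored in $\Array$. I would use the classical in-place permutation-inversion procedure: repeatedly pick an unprocessed position, walk around its cycle reversing the links, and mark each element as it is finalized so that no cycle is processed twice. Because all values lie in $\{1, \ldots, N\}$, a single reserved code (equivalently a sign bit) per entry suffices to distinguish processed from unprocessed elements, which keeps the total number of array accesses $O(N)$ and the extra space $O(1)$ words. A final pass clears the marks, leaving $\Array = \SA$.

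The hard part will be carrying out the inversion \emph{strictly} in place in linear time: a naive cycle-leader test (processing a cycle only from its minimum element) costs $O(N^2)$, so linear time genuinely requires a way to mark finalized elements, which at first glance seems to need $\Theta(N)$ extra bits. I would resolve this exactly as in Knuth's inverse-in-place algorithm, using the one spare code afforded per word by the word-RAM model (an entry only needs to range over $\{1, \ldots, N\}$), so that marking costs no asymptotic space. A secondary point to nail down is the boundary/sentinel convention for $\Phia$ (the entry representing the largest suffix, and the value $\Phia[\SA[1]]$), so that the stage-one traversal visits precisely $\SA[N], \ldots, \SA[1]$ and terminates correctly; this is routine but must be checked to guarantee correctness.
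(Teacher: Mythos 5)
Your proposal is correct, but it takes a genuinely different route from the paper's own proof of Lemma~\ref{lem:phi2sa}. The paper goes back through induced sorting: it simulates a right-to-left scan of $\SA$ via the $\Phia$ links to extract a lexicographically decreasing linked list of the LMS-suffixes (testing the LMS condition using $T$ and precomputed S-interval boundaries), rearranges that list into $\LMSSA$ in $\Array[1..k]$ by reversing the even/odd-slot trick of Step~\ref{phiis:arrange_lms}, and then reruns Steps~\ref{sais:putlms}--\ref{sais:stype_sort} of Nong's algorithm to re-induce $\SA$. Your two-stage route --- one pass rewriting $\Phia$ into $\RANK$ (reading each link before overwriting it with the rank, exactly the traversal pattern of Algorithm~\ref{algo:nsv_from_phi} minus peak elimination), followed by Knuth-style in-place permutation inversion --- is simpler and strictly more general: it never consults $T$ and never uses the constant-alphabet assumption, whereas the paper's proof needs both (the per-character counts that delimit the L- and S-intervals are what make its first step $O(1)$-space only for constant $\Sigma$). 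One point you should tighten: ``a single reserved code'' and ``a sign bit'' are \emph{not} equivalent. A single spare code (like the paper's $\EMPTY$) adds one value to the code space; marking every entry as processed or not while still storing its value in $\{1,\ldots,N\}$ requires $2N$ codes, i.e., a full extra bit per word. This is harmless on a word RAM with word length at least $\log N + 1$, and it is consistent in spirit with the paper's own reliance on $\EMPTY$, but it is a strictly stronger per-word assumption than one reserved value and should be stated as such. Your sentinel caveat is real but routine: under the paper's conventions the traversal starts at $\Phia[0]=\SA[N]$ and terminates on the value $0$, as in Algorithm~\ref{algo:nsv_from_phi}, so stage one visits exactly $\SA[N],\ldots,\SA[1]$.
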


%-*-mode: latex; coding:utf-8-*-

\section{Computational Experiments}
We implemented BGtwo and two variations of BGone,
these are differ in the computation of $\Phia$ array.
One of which computes $\Phia$ array directly from $T$ (BGoneT),
and the other firstly computes $\SA$ and then computes
$\Phia$ array from $\SA$ (BGoneSA).
The 3 implementation are available at \url{http://code.google.com/p/bgone/}.
We compared our algorithms
with the implementation of KKP1, KKP2, and KKP3
\footnote{\url{https://www.cs.helsinki.fi/group/pads/lz77.html}.}.
We use SACA-K which is the implementation of Nong's algorithm to compute $\LMSSA$ in BGoneT,
and use divsufsort to compute $\SA$ in the other implementations,
BGtwo, BGoneSA, KKP1, KKP2, and KKP3.
Note that in terms of speed, BGoneT has a disadvantage
since although divsufsort is not a truly linear time algorithm, 
it is generally faster than SACA-K.
These conditions were chosen since the latter algorithms can choose any 
suffix array construction algorithm, while BGoneT cannot.

All computations were conducted on a Mac Xserve (Early 2009)
with 2 x 2.93GHz Quad Core Xeon processors and 24GB Memory,
only utilizing a single process/thread at once.
The programs were compiled using the GNU C++ compiler ({\tt g++}) 4.7.1
with the {\tt -Ofast -msse4.2} option for optimization.
The running times are measured in seconds, starting from after 
reading input text in memory, and the average of 3 runs is reported.
We use the data used in previous work
\footnote{\url{http://pizzachili.dcc.uchile.cl/texts.html}, \url{http://pizzachili.dcc.uchile.cl/repcorpus.html}.}.
Table~\ref{table:time} shows running times of the algorithms,
and how many integer arrays is used.

The results show that the runtimes of our algorithms is only 
about twice as slow as KKP1, despite the added complexity introduced
so that the algorithm can run on a single integer array.
One reason that KKP1 is faster may be because BGone 
needs random access on the integer array to compute the $\NSV$ array,
while KKP1 does not.
Although KKP1 needs to write and read $\SA$ to and from the disk,
sequential I/O seems to be faster than random access on the memory.
BGoneSA which computes $\Phia$ array through $\SA$ is a little faster than
BGoneT which computes $\Phia$ directly.

\begin{table}[t]
  \label{table:time}
  \caption{
    Time and space consumption for computing LZ factorization.
    The times were measured after reading input text in memory.
    The runtime of KKP1 includes the writing and reading time of $\SA$
    to and from the disk.
  }
  \begin{center}
    
%%% Local Variables: 
%%% mode: japanese-latex
%%% TeX-master: t
%%% End: 

\newcommand{\wid}{0.8cm}
\newcommand{\widb}{1.3cm}
\begin{tabular}{|l|r|r|r|r|r|r|r|r|}
\hline
Algorithm & KKP1 & KKP2 & KKP3 & BGtwo & BGoneT & BGoneSA \\ \hline
\# of Arrays & 2 & 2 & 3 & 2 & 1 & 1 \\ \hline
\hline

proteins.200MB & 75.57  & 67.50  & 56.41  & 80.30  & 157.85  & 136.13 \\ \hline
english.200MB & 69.34  & 61.11  & 49.84  & 75.50  & 156.25  & 132.29 \\ \hline
dna.200MB & 72.35  & 64.01  & 52.76  & 79.46  & 146.46  & 136.85 \\ \hline
sources.200MB & 55.58  & 47.45  & 38.53  & 57.68  & 116.18  & 98.76 \\ \hline
coreutils & 54.83  & 46.68  & 37.45  & 58.64  & 116.94  & 101.34 \\ \hline
cere & 140.82  & 122.83  & 107.51  & 177.94  & 323.00  & 299.18 \\ \hline
kernel & 69.19  & 59.04  & 49.81  & 77.15  & 154.16  & 131.48 \\ \hline
einstein.en.txt & 145.25  & 126.92  & 111.90  & 178.67  & 333.97  & 287.76 \\ \hline

\end{tabular}

  \end{center}
\end{table}

\bibliographystyle{siam}
\bibliography{ref}

%%% Local Variables: 
%%% mode: japanese-latex
%%% TeX-master: t
%%% End: 
\clearpage
\appendix
\section*{Appendix}
\section{Pseudo-code}
\begin{algorithm2e}[h]
  \caption{
    Computing the LZ77 factorization from $\SA$ via $\PSV$ and $\NSV$ arrays (KKP3)
  }
  \label{algo:pnsv_from_sa}
  \SetKwInOut{Input}{Input}\SetKwInOut{Output}{Output}
  \SetKw{KwAnd}{and}
  \Input{Suffix Array $\SA[1..N]$ of string $T$ of length $N$}
  $\SA[0] \leftarrow 0$ \;
  $\SA[N+1] \leftarrow 0$ \;
  \For{$i \leftarrow 1$ \KwTo $N+1$}{
    \While{$\SA[top] > \SA[i]$}{
      $\NSV[\SA[top]] \leftarrow \SA[i]$ \;
      $\PSV[\SA[top]] \leftarrow \SA[top-1]$ \;
      $top \leftarrow top -1$ \;
    }
    $top \leftarrow top +1$ \;
    $\SA[top] \leftarrow \SA[i]$ \;
  }
  \While{$i \leq n$}{
    $lcp_{nsv} \leftarrow lcp(i, \NSV[i])$ \tcp*[l]{return 0 if $\NSV[i]=0$}
    $lcp_{psv} \leftarrow lcp(i, \PSV[i])$ \tcp*[l]{return 0 if $\PSV[i]=0$}
    $l \leftarrow -1$ \;
    $p \leftarrow T[i]$ \;
    \lIf{$lcp_{nsv} > 0$ \KwAnd $lcp_{nsv} \geq lcp_{psv}$}{
      $l \leftarrow lcp_{nsv}$ ;
      $p \leftarrow \NSV[i]$
    }\lElseIf{$lcp_{psv} > 0$}{
      $l \leftarrow lcp_{psv}$ ;
      $p \leftarrow \PSV[i]$
    }
    \Output{($(l, p)$)}
    $i \leftarrow i + \max (1, l)$ \;
  }
\end{algorithm2e}

\section{Proof of Lemma~\ref{lem:phi2sa}}
\begin{proof}
  The induced sorting algorithm constructs $\SA$ by first computing
  $\LMSSA$ and stores it in $\SA[1..k]$, where $k$ is the number of 
  LMS suffixes.  
  Thus,
  if we can somehow compute $\LMSSA$ from the $\Phia$ array
  in linear time using $O(1)$ extra working space
  and save it in $\SA[1..k]$, we have proved the lemma.

  Let $\Array$ be an integer array of size $N$, used in our algorithm,
  initially equal to the $\Phia$ array.
  Our algorithm will consist of two steps.
  First, for all LMS-suffixes $\suf{i}$, we compute the preceding suffix of
  $\suf{i}$, and store it in $\Array[i]$
  (we store $\Array[i] = \EMPTY$ if $\suf{i}$ is not an LMS suffix), thus
  obtaining an array based linked list representation of LMS-suffixes in
  lexicographically decreasing order.
  Second, we rewrite $\Array$ so that $\Array[1..k] = \LMSSA[1..k]$,
  reversing the procedure described in Step~\ref{phiis:arrange_lms} 
  of Section~\ref{sec:const_phi}.

  For the first step, we compute for each character $c$,
  the starting positions in $\SA$ of the S-interval for $c$
  by counting the number of L-suffixes and S-suffixes that start 
  with the character $c$. As in Step~\ref{sais:putlms} of Section~\ref{subsec:nong},
  this can be done in linear time and constant space.
  We then simulate a right to left traversal on the $\SA$
  using the $\Phia$ array stored as $\Array$.
  Let $\suf{\SA[i]}$ be the suffix that the algorithm currently processing.
  For $\suf{\SA[i]}$ to be an LMS-suffix, it must be that
  $\suf{\SA[i]}$ is an S-suffix, and also $T[\SA[i]-1] > T[\SA[i]]$.  
  The former condition can be checked by 
  whether the position $i$ is in an L-interval or an S-interval.
  During the process, we remember the previous LMS-suffix $\suf{j}$,
  and set $\Array[j]=i$ if $\suf{i}$ is an LMS-suffix,
  and we continue traversing by reading $\Array[i]$ and setting $\Array[i]=\EMPTY$.
  In this way, we can compute a lexicographically decreasing list of
  LMS suffixes, represented in $\Array$ in linear time and $O(1)$ working space.

  Now, we only have to rearrange this list of suffixes to $\LMSSA$.
  The process is the opposite of Step~\ref{phiis:arrange_lms} in Section~\ref{sec:const_phi}.
  We first traverse the LMS suffixes in lexicographically decreasing order.
  We try to set the largest LMS suffix at $\Array[2k]$, 
  the second largest LMS suffix at $\Array[2(k-1)]$ and so on.
  If for the $i$th largest LMS suffix, $\Array[2i] = \EMPTY$, we simply set $\Array[2i]$ to be this value. 
  Otherwise, $2i$ was an LMS-suffix and part of the list.
  In this case, we store the value in $\Array[2i-1]$.
  Notice that again since LMS suffixes cannot start at consecutive positions,
  if $2i$ was an LMS suffix, $2i-1$ cannot be an LMS suffix, and the
  algorithm will never try to set another value at this position.

  Since the original linked list of LMS-suffixes was not overwritten and is preserved, 
  we can traverse this again this time setting the corresponding positions to $\EMPTY$.
  Then, checking all positions $2i$ for $1 \leq i\leq k$, 
  if $\Array[2i] = \EMPTY$ then the corresponding value was stored in $\Array[2i-1]$ and can be retrieved.
  Finally, we copy the values at $\Array[2i]$ to $\Array[i]$ for each $1\leq i\leq k$.
  Thus, $\LMSSA$ can be computed in linear time using $O(1)$ working space.  
\end{proof}

\end{document}